\documentclass[12pt,leqno,letterpaper]{article}

\usepackage{amsmath,amsthm,enumerate,amssymb}
\usepackage[utf8]{inputenc}

\usepackage[english]{babel}
\usepackage{graphicx}
\usepackage{color}

\usepackage{graphicx}

\newtheorem{theorem}{Theorem}[section]

\newtheorem{lemma}[theorem]{Lemma}

\newtheorem{example}[theorem]{Example}

\newcommand{\tr}{{\rm Tr\hskip -0.2em}~}

\DeclareMathOperator{\frechetdiff}{\mathit d}
\newcommand{\fd}[1]{\hskip-0.2em\frechetdiff\hskip -0.3em{#1}}







\begin{document}

\title{A note on quantum entropy}
\author{Frank Hansen}
\date{November 30, 2015}

\maketitle

\begin{abstract} Incremental information, as measured by the quantum entropy, is increasing when two ensembles are united. This result was proved by Lieb and Ruskai, and it is the foundation for the proof of strong subadditivity of quantum entropy. We present a truly elementary proof of this fact in the context of the broader family of matrix entropies introduced by Chen and Tropp.
\end{abstract}

\section{Introduction}

Let $ \rho $ be a positive definite matrix on a bipartite system $ H=H_1\otimes H_2 $ of Hilbert spaces $ H_1 $ and $ H_2 $ of finite dimensions. Lieb and Ruskai \cite[Theorem 1]{kn:lieb:1973:3} proved that the function
\[
\rho\to S(\rho)-S(\rho_1)
\]
is concave in positive definite $ \rho $ on $ H, $ where $ \rho_1=\tr_2\rho $ is the partial trace of $ \rho $ on $ H_1 $ and $ S(\rho)=-\tr\rho\log\rho $ is the quantum entropy of $ \rho. $ The proof used Klein's inequality and Lieb's concavity theorem. Before giving a truly elementary proof of this result we broaden the investigation and consider functions of the form
\begin{equation}\label{general difference map}
G(\rho)=d_2^{-1}\tr_{12}f(d_2\rho)-\tr_1 f(\rho_1),
\end{equation}
where $ f\colon(0,\infty)\to\mathbf R $ is a given function, and $ d_2 $ is the dimension of $ H_2. $ 
If $ f $ is sufficiently smooth then $ G $ is Fréchet differentiable and the first Fréchet differential is given by
\[
\begin{array}{rl}
dG(\rho)h&=\tr_{12}\fd{}f(d_2\rho)h - \tr_1\fd{}f(\rho_1)h_1\\[1.5ex]
&=\tr_{12} f'(d_2\rho)h - \tr_1 f'(\rho_1)h_1,
\end{array}
\]
where we used $ \tr\fd{}f(x)h=\tr f'(x)h, $ see for example \cite[Theorem 2.2]{kn:hansen:1995}. We continue to calculate the second Fréchet differential
\begin{equation}\label{second frechet differential}
d^2G(\rho)(h,h)=d_2\tr_{12} h\fd{}f'(d_2\rho)h - \tr_1 h_1\fd{}f'(\rho_1)h_1
\end{equation}
in positive definite $ \rho $ and self-adjoint $ h. $ 

\subsection{Matrix entropies}

Matrix entropies were introduced by Chen and Tropp  \cite{kn:tropp:2014} as a tool to obtain concentration inequalities for random matrices, and their representing functions may be characterised in various ways  \cite[Theorem 1.2]{kn:hansen:2014:3}. In particular, a twice differentiable function $ f\colon(0,\infty)\to\mathbf R $ is (the representing function) of a matrix entropy if and only if the function of two variables
\begin{equation}\label{equivalent convexity condition}
(x,h)\to\tr h^*\fd{}f'(x)h
\end{equation}
is convex for positive definite $ \rho $ and arbitrary $ h. $ Lieb  \cite[Theorem 3]{kn:lieb:1973:1} proved that the function
\begin{equation}\label{convexity statement for the logarithm}
(x,h)\to\tr h^*\fd{}\log(x)h=\int_0^{\infty}\tr h^*\frac{1}{\rho+\lambda}h\frac{1}{\rho+\lambda}\,d\lambda
\end{equation}
is convex. The function $ t\to t\log t $ is therefore a matrix entropy (this result was obtained by different means in \cite{kn:tropp:2014}).

The convexity statement in (\ref{convexity statement for the logarithm}) may be easily obtained by the following argument. Consider the positive function
\[
k(t,s)=\frac{\log t-\log s}{t-s}=\int_0^1 (\lambda t+(1-\lambda)s)^{-1}\,d\lambda\qquad t,s>0,
\]
and let $L_x$ and $R_x$ denote left and right multiplication with $x.$ Then
\[
\tr h^*\fd{}\log(x) h=\sum_{i,j=1}^n |(he_i\mid e_j)|^2 \frac{\log\lambda_i-\log\lambda_j}{\lambda_i-\lambda_j}=\tr h^*\, k(L_x,R_x) h,
\]
where the intermediary calculation is carried out in an orthonormal basis $ (e_1,\dots,e_n) $ of eigenvectors of $ x $ with corresponding eigenvalues $ \lambda_1,\dots,\lambda_n $ counted with multiplicity. Combining the two formulas we obtain
\[\label{transform equation between derivatives}
\tr h^*\fd{}\log(x) h=\int_0^1 \tr h^* (\lambda L_x + (1-\lambda) R_x)^{-1} h\,d\lambda.
\]
The convexity in (\ref{convexity statement for the logarithm}) now follows from convexity of the operator map $ (A,B)\to B^*A^{-1}B, $ where $ A $ is positive definite, proved by Lieb and Ruskai \cite{kn:lieb:1974}. With Ando's elegant proof \cite[Theorem 1]{kn:ando:1979} this result is readily accessible.

The above line of arguments may be generalised, and we obtain that a sufficient condition for  a function $ f\colon(0,\infty)\to\mathbf R $ to be a matrix entropy is that the second derivative $ f'' $ is positive, decreasing and operator convex \cite[Theorem 1.3]{kn:hansen:2014:3}.

\begin{lemma}\label{Main lemma}
Let $ \Phi $ be a conditional expectation, and let $ f $ be a matrix entropy.
Then
\[
\tr \Phi(h)^*\fd{}f'\bigl(\Phi(x)\bigr)\Phi(h)\le \tr h^*\fd{}f'(x)h
\]
for positive definite $ \rho $ and arbitrary $ h. $
\end{lemma}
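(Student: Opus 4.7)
The plan is to combine the joint convexity of the functional $F(x,h)=\tr h^*\fd{}f'(x)h$ — which is precisely the defining property of a matrix entropy by the characterisation (\ref{equivalent convexity condition}) — with a representation of the conditional expectation $\Phi$ as an average of unitary conjugations.

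In the finite-dimensional matrix setting relevant here, a trace-preserving conditional expectation $\Phi$ onto a unital $*$-subalgebra $\mathcal B$ admits the representation
$$
\Phi(y)=\int_{\mathcal U} u\, y\, u^*\,d\mu(u),
$$
where $\mathcal U$ denotes the unitary group of the relative commutant of $\mathcal B$ and $\mu$ is normalised Haar measure. I would invoke this standard structural fact as the first main ingredient. The second ingredient is the unitary invariance $F(uxu^*,uhu^*)=F(x,h)$ for every unitary $u$; this follows from $\fd{}f'(uxu^*)(uhu^*)=u\bigl(\fd{}f'(x)h\bigr)u^*$, which is transparent from the Daleckii--Krein representation of the Fréchet differential in an eigenbasis of $x$, together with cyclicity of the trace.

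With both ingredients in hand, Jensen's inequality applied to the jointly convex function $F$ and the probability measure $\mu$ yields
$$
F\bigl(\Phi(x),\Phi(h)\bigr)\le\int_{\mathcal U} F(uxu^*,uhu^*)\,d\mu(u)=\int_{\mathcal U} F(x,h)\,d\mu(u)=F(x,h),
$$
which is precisely the conclusion of the lemma.

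The main obstacle is justifying the unitary-averaging representation of $\Phi$; once this is granted, the proof reduces to a single application of Jensen's inequality to the hypothesis on $f$. A reader who prefers to avoid Haar measure may instead approximate $\mu$ by a finite convex combination $\Phi(y)=\sum_{i=1}^m p_i u_i y u_i^*$, which is sufficient by continuity of $F$ in its first argument on positive definite matrices. I do not expect any analytic subtleties beyond this structural step, since unitary invariance and joint convexity perform the entire remaining work.
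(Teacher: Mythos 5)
Your proposal is correct and follows essentially the same route as the paper: the paper writes $\Phi(x)=\sum_{i=1}^n p_i u_i^* x u_i$ as a finite convex combination of unitary conjugations and applies the joint convexity of $(x,h)\mapsto\tr h^*\fd{}f'(x)h$ together with its unitary invariance, which is exactly your Jensen argument in its finite (rather than Haar-averaged) form. The only cosmetic difference is that you make the unitary invariance step explicit via the Daleckii--Krein representation, whereas the paper absorbs it silently into the final equality.
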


\begin{proof}
In the finite dimensional case that we consider $ \Phi $ may be written on the form
\[
\Phi(x)=\sum_{i=1}^n p_i u_i^* x u_i
\]
for unitaries $ u_1,\dots,u_n $ and non-negative weights $ p_1,\dots,p_n $ summing up to one. The
convexity of the map in (\ref{equivalent convexity condition}) thus yields
\[
\begin{array}{rl}
\tr \Phi(h)^*\fd{}f'\bigl(\Phi(x)\bigr)\Phi(h)&\displaystyle\le\sum_{i=1}^n p_i \tr u_i^*hu_i\fd{}f'(u_i^*xu_i)u_i^*hu_i\\[3.5ex]
&= \tr h^*\fd{}f'(x)h,
\end{array}
\]
which is the assertion.
\end{proof}

\section{The main result}

\begin{theorem}
Let $ f\colon(0,\infty)\to\mathbf R $ be a matrix entropy. The function
\[
G(\rho)=d_2^{-1}\tr_{12} f(d_2\rho)-\tr_1 f(\rho_1)
\]
is convex in positive definite $ \rho $ acting on a bipartite system $ H=H_1\otimes H_2 $ of Hilbert spaces of finite dimensions, where $ \rho_1 $ denotes the partial trace of $ \rho $ on $ H_1 $ and $ d_2=\dim H_2. $ 
\end{theorem}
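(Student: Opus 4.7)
The plan is to show that the second Fréchet differential $ d^2 G(\rho)(h,h) $ is nonnegative for every self-adjoint $ h, $ using the explicit formula (\ref{second frechet differential}) together with Lemma~\ref{Main lemma}. Convexity of $ G $ in the positive definite cone then follows in the usual way.

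The bridge to Lemma~\ref{Main lemma} is the following conditional expectation on the bipartite system. Define
\[
\Phi(x)=d_2^{-1}(\tr_2 x)\otimes I_2,
\]
which is a unital completely positive projection from $ B(H_1\otimes H_2) $ onto the subalgebra $ B(H_1)\otimes I_2, $ hence a conditional expectation and therefore representable in the averaged-unitary form used in the proof of Lemma~\ref{Main lemma}. With $ x=d_2\rho $ one has $ \Phi(d_2\rho)=\rho_1\otimes I_2, $ and with the self-adjoint perturbation $ h $ one has $ \Phi(h)=d_2^{-1} h_1\otimes I_2. $

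The second ingredient is the tensor compatibility $ \fd{}f'(A\otimes I_2)(B\otimes I_2)=\fd{}f'(A)(B)\otimes I_2, $ which is a routine consequence of the Daleckii--Krein type representation of the Fréchet differential and the fact that the spectral projections of $ A\otimes I_2 $ are $ P_i\otimes I_2. $ Applying Lemma~\ref{Main lemma} to $ x=d_2\rho $ and to $ h $ and inserting these identities, the left-hand side becomes
\[
\tr_{12}\Phi(h)^*\fd{}f'(\rho_1\otimes I_2)\Phi(h)=d_2^{-2}\tr_{12}\bigl(h_1\fd{}f'(\rho_1)h_1\otimes I_2\bigr)=d_2^{-1}\tr_1 h_1\fd{}f'(\rho_1)h_1,
\]
so the inequality of the lemma reads
\[
d_2^{-1}\tr_1 h_1\fd{}f'(\rho_1)h_1 \le \tr_{12} h\fd{}f'(d_2\rho)h.
\]
Multiplying by $ d_2 $ and comparing with (\ref{second frechet differential}) yields $ d^2 G(\rho)(h,h)\ge 0, $ as required.

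I do not expect any genuine obstacle beyond bookkeeping: the two things to verify carefully are that $ \Phi $ above is indeed a conditional expectation in the sense needed by Lemma~\ref{Main lemma} (so that the mixture-of-unitaries representation applies), and that the scaling factors of $ d_2 $ between the two trace terms match. Both are elementary, and the tensor identity for $ \fd{}f' $ is standard once one writes the differential in a spectral basis.
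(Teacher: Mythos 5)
Your proposal is correct and is essentially identical to the paper's own proof: the conditional expectation $\Phi(x)=d_2^{-1}(\tr_2 x)\otimes I_2$ is exactly the map $\pi_1$ the paper uses (via $\rho_1\otimes 1_2=d_2\pi_1(\rho)$), and the application of Lemma~\ref{Main lemma} to $x=d_2\rho$, the bookkeeping of the $d_2$ factors, and the comparison with (\ref{second frechet differential}) all match. The only difference is that you make explicit the tensor compatibility of $\fd{}f'$, which the paper uses implicitly.
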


\begin{proof} 
We may write $ \rho_1\otimes 1_2=d_2\pi_1(\rho) $
in terms of a conditional expectation $ \pi_1 $ on $ B(H). $ Since $ f $ is a matrix entropy we obtain
\[
\begin{array}{l}
\tr_1 h_1^*\fd{}f'(\rho_1)h_1
=d_2^{-1}\tr_{12}(h_1\otimes 1_2)^*\fd{}f'(\rho_1\otimes 1_2)(h_1\otimes 1_2)\\[1.5ex]
=d_2 \tr_{12} \pi_1(h)^* \fd{}f'\bigl(d_2\pi_1(\rho)\bigr)\pi_1(h)\\[1.5ex]
\le d_2\tr_{12} h^*\fd{}f'(d_2\rho)h,
\end{array}
\]
where we used Lemma~\ref{Main lemma}.
It then follows from (\ref{second frechet differential}) that the second Fréchet differential $ d^2 G(\rho)(h,h) $ is non-negative, so $ G $ is convex.
\end{proof}

\begin{example}
If we consider the matrix entropy $ f(t)=t\log t $ then the map
\[
\begin{array}{rl}
G(\rho)&=d_2^{-1}\tr_{12}d_2\rho\log(d_2\rho)-\tr_1 \rho_1\log\rho_1\\[1.5ex]
&=\log d_2\tr_{12}\rho +\tr_{12}\rho\log\rho-\tr_1\rho_1\log\rho_1
\end{array}
\]
is convex. But this shows that the map $ \rho\to S(\rho) - S(\rho_1) $
is concave, where $ S(\rho)=-\tr\rho\log\rho $ is the von Neumann entropy. 
\end{example}

\begin{example}
If we consider the matrix entropy $ f(t)=t^p $ for $ 1\le p\le 2, $ it follows that the map
\[
G(\rho)=d_2^{p-1}\tr_{12}\rho^p-\tr_1\rho_1^p
\]
is convex.
\end{example}

{\small


\vfill

\noindent Frank Hansen: Institute for Excellence in Higher Education, Tohoku University, Japan.\\
Email: frank.hansen@m.tohoku.ac.jp.
      }

\end{document}